\newtheorem{lemma}{Lemma}
\newtheorem{definition}{Definition}
\newtheorem{example}{Example}
\newtheorem{theorem}{Theorem}
\title{Spinal Test Suites for Software Product Lines}
\author{Harsh Beohar
\institute{Center for Research on Embedded Systems\\ Halmstad University, Sweden}
\email{harsh.beohar@hh.se}
\and
Mohammad Reza Mousavi
\institute{Center for Research on Embedded Systems\\ Halmstad University, Sweden}
\email{m.r.mousavi@hh.se}
}
\begin{document}
\maketitle

\begin{abstract}
A major challenge in testing software product lines is efficiency.
In particular, testing a product line should take less effort than testing each and every product individually.
We address this issue in the context of input-output conformance testing, which is a formal theory of model-based testing.
We extend the notion of conformance testing on input-output featured transition systems with the novel concept of spinal test suites. We show how this concept dispenses with retesting the common behavior among different, but similar, products of a software product line.
\end{abstract}

\section{Introduction}

\subsection{Motivation}
Testing and debugging are labor-intensive parts of software development.
In particular, testing a software product line is extremely time- and resource-consuming due to the various configurations of products  that are derivable from the product line.
In order to manage the complexity, the test process of a software product line must be efficiently coordinated: common features ought to be tested once and for all and only specific variation points of various configurations should be tested separately.

Model-based testing is an approach to structure the test process by  exploiting test models.
Input-output conformance testing (ioco) \cite{Tretmans08} is a formalization of model-based testing employing input-output labeled transition systems as models.
In the past, we extended the formal definition ioco to the setting of software product lines \cite{fioco-sac14}, by exploiting input-output featured transition systems.
In this paper, we define a theoretical framework, which serves as the first step towards an efficient discipline of conformance model-based testing for software product lines.

To this end, we define the notion of \emph{spinal test suite}, which allows one to test the common features once and for all, and subsequently, only focus on the specific features when moving from one product configuration to another.
We show that spinal test suites are exhaustive, i.e., reject each and every non-conforming implementation under test, when the implementation satisfies the \emph{orthogonality criterion}. This is a rather mild criterion, which implies that old features are not capable of disabling any enabled behavior from the new features on their own and without involving any interaction with the new feature's components.


\subsection{Running example}
To motivate various concepts throughout the paper,
we use the following running example.
Consider an informal description of a cruise controller, present in  contemporary cars. The purpose of a cruise controller is
to automatically maintain the speed of the car as specified by the
driver.
We denote the basic feature of a cruise controller by $\action{cc}$.
Cruise
controllers also have an optional feature, called  collision avoidance controller ($\action{cac}$), whose task is to
react to any obstacle detected ahead of the car within a danger zone. In case the collision avoidance
feature is included in a cruise controller and an obstacle is detected, the engine power is regulated using an emergency control algorithm.

\subsection{Organization}

The rest of this paper is structured as follows.
In Section \ref{sec::backg}, we recall the formal definitions regarding models, product derivation and conformance testing.
In Section \ref{sec:spinal}, we define the notion of spinal test suite, which is a compact test suite for the ``new'' features with respect to an already tested product (or a set of features).
In Section \ref{sec:exhaust}, we study the exhaustiveness of the spinal test suites: we show that spinal test suites are in general non-exhaustive, but this can be remedied by requiring mild conditions on the implementation under test.
In Section \ref{sec:related}, we sketch the context of this research.
In Section \ref{sec:conc}, we conclude the paper and outline the  direction of our ongoing research.

\section{Background}\label{sec::backg}
\subsection{Input-output featured transition systems}
Feature diagrams \cite{Kang90,Schobbens:2006} have been used to model variability constraints in SPLs using a graphical notation. However,
it is well known that feature diagrams only specify the structural aspects of
variability and they should be complemented with other models in order to specify the behavioral aspects \cite{Classen:2012:fts}. To this end, we describe
the behavior of a software product line using an input-output featured
transition system (IOFTS) \cite{fioco-sac14}, defined and explained below.

Let $F$ be the set of features (extracted from a feature diagram) and $\B=\{\top,\bot\}$ be the set of Boolean constants; we denote by  $\B(F)$ the set of all
propositional formulae generated by interpreting the elements of the set $F$ as propositional
variables. For instance, in our running example, formula
$\action {cc}\wedge\neg\action {cac}$ asserts the presence of cruise controller and the absence of
collision avoidance controller. We let $\phi,\phi'$ range over the set $\B(F)$.

\begin{definition}\label{def:iofts}
A \emph{input-output featured transition system}
(IOFTS) is a 6-tuple $(S,s,\tact,F,T,\Lambda)$, where
\begin{enumerate}
    \item $S$ is the set of \emph{states},
    \item $s\in S$ is the \emph{initial state},
    \item $\tact=\inp\uplus\out \uplus \set\tau$ is the set of \emph{actions}, where $\inp$ and $\out$ are disjoint sets of \emph{input} and \emph{output} actions, respectively, and $\tau$ is the silent (internal) action,
    \item $F$ is a set of \emph{features},
    \item $T\subseteq S\times \tact \times \B(F) \times S$ is the \emph{transition relation} satisfying the following condition (for every $s_1,s_2\in S,a\in\tact,\phi,\phi'\in\B(F)$):
        \[(s_1,a,\phi,s_2)\in T \wedge (s_1,a,\phi',s_2) \in T\Rightarrow \phi=\phi',\footnote{Here, by $\phi=\phi'$ we assert that $\phi$ and $\phi'$ are syntactically equivalent.}\]
    \item $\Lambda\subseteq\set{\lambda:F\rightarrow \B}$ is a set of \emph{product configurations}.
\end{enumerate}
\end{definition}

We write $s\stepa a \phi s'$ to denote an element $(s,a,\phi,s')\in T$ and drop the subscript $\phi$ whenever it is clear from the context. Graphically, we denote the initial state of an IOFTS by an incoming arrow with no source state and we refer to an IOFTS by its initial state. Following the standard notation, we denote the \emph{reachability} relation by $\steps{}\subseteq S\times \fseq \act \times S$, which is inductively defined as follows:
\[\frac{}{s\steps \eseq s} \quad \frac{s\steps \sigma s', s' \step \tau s''}{s\steps {\sigma } s''} \quad \frac{s\steps \sigma s', s' \step a s'',a\neq \tau}{s\steps {\sigma a} s''}.\]
Furthermore, the set of \emph{reachable} states from a state $s$ is denoted by $\reach s =\set{s'\mid \exists_{\sigma}\ s\steps\sigma s'}$.

\begin{example}
Consider the IOFTS of a cruise controller, drawn in Figure~\ref{fig:cc}, where inputs
and outputs are prefixed with symbols $?$ and $!$, respectively.
(Note that $?$ and $!$ are \emph{not part} of the action names and are left out when the type of the action is irrelevant or clear from the context.)
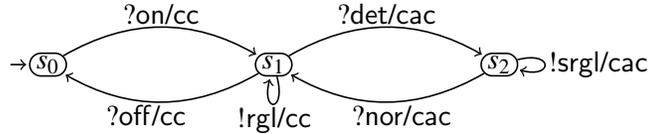
\begin{figure}[htb]
\centering
\tikzset{
  state/.style={
    rounded rectangle,
    draw=black,
  }
}
\begin{tikzpicture}[->,shorten >=1pt,auto, semithick,inner sep=1pt]
\node[state] (s1) {$s_0$};
\node[state] (s2) at ($(s1.center) + (3,0)$) {$s_1$};
\node[state] (s3) at ($(s2.center) + (3,0)$) {$s_2$};
\path[->]
    ($(s1.center)+(-0.5,0)$) edge (s1)
    (s1) edge[bend left] node {?\action{on}/\action{cc}} (s2)
    (s2) edge[bend left] node[xshift=-0.2cm] {?\action{off}/\action{cc}} (s1)
    (s2) edge[loop below] node {!\action{rgl}/\action{cc}} (s2)
    (s2) edge[bend left] node {?\action{det}/\action{cac}} (s3)
    (s3) edge[bend left] node[xshift=0.2cm] {?\action{nor}/\action{cac}} (s2)
    (s3) edge[loop right] node {!\action{srgl}/\action{cac}} (s3);
\end{tikzpicture}
\caption{IOFTS of the cruise controller.}
\label{fig:cc}
\end{figure}
The regulate action, indicated by $\action{rgl}$, regulates the engine power of the car when the cruise controller is activated.
Furthermore, when $\action{cac}$ is included in a product, some additional behavior may emerge. Namely, while the cruise controller is on, if an object is detected within a danger zone, then the cruise
controller regulates the engine power in a safe manner  denoted by $\action{srgl}$. When the sensor signals a normal state,
the cruise controller returns to the normal regulation regime.
(For a realistic case study of a cruise controller and its formal model, we refer to \cite{delangen12}.)
\end{example}

%

\subsection{Product derivation operators}
In \cite{fioco-sac14}, we introduced a family of product derivation operators (parameterized by feature constraints),
which project the behavior of an IOFTS into another IOFTS representing a selection of products (a product sub-line).

\begin{definition}\label{def:fts-proj}
Given a feature constraint $\phi$ and an IOFTS $(S,s,\tact,F,T,\Lambda)$, the projection operator $\Delta_\phi$ induces an IOFTS $(S',\prd\phi{s},\tdact,F,T',\Lambda')$, where
\begin{enumerate}
\item $S'=\set{\prd \phi {s'} \mid s'\in S}$ is the set of states,
\item $\prd \phi s$ is the initial state,
\item $\tdact=\tact\uplus\set\delta$ is the set of actions, where $\delta$ is the special action label modeling quiescence \cite{Tretmans08},
\item $T'$ is the smallest relation satisfying:
\begin{mathpar}
\irule{s\stepa a {\phi'} s' \\\\
\exists_{\lambda}\ (\lambda \in\Lambda \wedge \lambda\models(\phi\wedge\phi')) }{\prd \phi s \stepa a {\phi\wedge \phi'} \prd \phi {s'}}\label{rule:act}\\
\irule{\bar\Lambda=\set{\lambda\in\Lambda\mid \lambda\models\phi \wedge \mathbf{Q}(s,\lambda)}\quad
\bar\Lambda\neq\emptyset}{\prd \phi s \stepa \delta {\phi\wedge (\bigvee_{\lambda\in\bar\Lambda}\ \lambda)} \prd \phi s}\label{rule:quiet}
\end{mathpar}
where the predicate $\mathbf{Q}(s,\lambda)$ is defined as $$\forall_{s',a,\phi'}\ \big(s\stepa a {\phi'} s' \wedge a\in\out\cup\set\tau\big) \Rightarrow \lambda\not\models\phi'.$$
\item $\Lambda'=\set{\lambda\in \Lambda\mid \lambda\models\phi}$ is the set of product configurations.
\end{enumerate}
\end{definition}

In the above-given rules $\lambda \models \phi$, denotes that valuation $\lambda$ of features satisfies feature constraint $\phi$.
Intuitively, rule~\eqref{rule:act} describes the behavior of those valid products that satisfy the feature constraint $\phi$ in addition to the original annotation of the  transition emanating from $s$.
Rule~\eqref{rule:quiet} models quiescence (the absence of outputs and internal actions) from the state $\prd{\phi}s$.
Namely, it specifies that the projection with respect to $\phi$ is quiescent, when there exists a valid product $\lambda$ that satisfies $\phi$ and is quiescent, i.e., cannot perform any output or internal transition. Quiescence at state $s$ for a feature constraint $\lambda$ is formalized using the predicate
$\mathbf{Q}(s,\lambda)$, which states that from state $s$ there is no output or silent transition with a constraint satisfied by $\lambda$.  In the conclusion of the rule, a $\delta$ self-loop is specified and its constraint holds when $\phi$ holds and at least the feature constraint of one quiescent valid product holds.
This ability to observe the absence of outputs (through a timeout mechanism) is crucial in defining the input-output conformance relation between a specification and an implementation \cite{fioco-sac14}.

\begin{example}\label{ex:cruise-cons-spec}
Consider the feature constraint $\phi=\action{cc} \wedge\neg\action{cac}$. The IOFTS
generated by projecting the IOFTS of cruise controller (in Figure~\ref{fig:cc}) using feature constraint $\phi$
is depicted in Figure~\ref{fig:cc:proj}. As mentioned before, this represents the product that has the basic cruise controller functionality but does not contain collision avoidance controller.
\end{example}

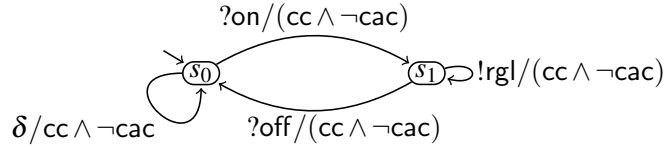
\begin{figure}[htb]
\centering
\tikzset{
  state/.style={
    rounded rectangle,
    draw=black,
  }
}
\begin{tikzpicture}[->,shorten >=1pt,auto, semithick,inner sep=1pt]
\node[state] (s1) {$s_0$};
\node[state] (s2) at ($(s1.center) + (3,0)$) {$s_1$};
\path[->]
    ($(s1.center)+(-0.5,0.3)$) edge (s1)
    (s1) edge[loop below, in=270, out=180, looseness=10] node[below left] {$\delta/\action{cc}\wedge\neg\action{cac}$} (s1)
    (s1) edge[bend left] node {$?\action{on}/(\action{cc}\wedge\neg\action{cac})$} (s2)
    (s2) edge[bend left] node[xshift=0.4cm]{$?\action{off}/(\action{cc}\wedge\neg\action{cac})$} (s1)
    (s2) edge[loop right] node {$!\action{rgl}/(\action{cc}\wedge\neg\action{cac})$} (s2);
\end{tikzpicture}
\caption{Cruise controller after projecting with feature constraint $\action{cc}\wedge\neg\action{cac}$.}
\label{fig:cc:proj}
\end{figure}

In the sequel, we use the phrase ``a feature specification $\prd\phi s$'' to refer to the following IOFTS: $$(\reach{\prd \phi s},\prd\phi s, \tdact,F,T,\Lambda).$$ 
We interpret the original IOFTS of Definition~\ref{def:iofts} as $\prd \top {s_0}$; this has the implicit advantage of always including quiescence in appropriate states.

\subsection{Input-output conformance}
The input-output conformance (ioco) testing theory \cite{Tretmans08} formalizes model-based testing in terms of a conformance relation between the states of a model (expressed as an input-output transition system) and an implementation under test (IUT). Note that the ioco theory is based on the \emph{testing assumption} that the behavior of the IUT can be expressed by an input-output transition system, which is unknown to the tester.

The conformance relation can be checked by constantly providing the SUT with inputs that are deemed relevant by the model and observing outputs from the SUT and comparing them with the possible outputs prescribed by the model. In the following, we recall such an \emph{extensional} definition of ioco, extended to software product lines in \cite{fioco-sac14}. An equivalent \emph{intensional} definition of ioco that relies on comparing the traces of the underlying IOFTS was also given in \cite{fioco-sac14}, but for the purpose of this paper  we only work with the extensional definition.
(After all, the extensional definition is the one that is supposed to be applied in practice.)

We begin with a notion of \emph{suspension traces} generated by an IOFTS. Informally, a suspension trace is a trace that may contain the action $\delta$ denoting quiescence \cite{Tretmans08}.
\begin{definition}
The set of \emph{suspension traces} of a feature specification $\prd\phi s$, denoted by $\straces{\prd\phi s}$ is defined as: $\set{\sigma\in\fseq{\act_{\delta}}\mid \exists_{s'}\ \prd\phi s \steps\sigma \prd \phi {s'}}$.
\end{definition}

For example, in the IOFTS of Example \ref{ex:cruise-cons-spec},
$\action{\delta}$$?\action{on}$$!\action{rgl}$ is a suspension trace emanating from the initial state $s_0$.
Next, we define the notion of test suite, which summarizes all possible test cases that can be generated from a feature specification.


\begin{definition}\label{def:testsuite}
The \emph{test suite} for an IOFTS $(\reach{\prd\phi s},\prd \phi s,\tdact,F,T,\Lambda)$, dennoted by $\mcal T(s,\phi)$,  is the IOFTS $(\X\cup\set{\pass,\fail},\X_0, \dact,F, T',\Lambda),$, where
\begin{enumerate}
    \item $\X=\big\{\big(\set{s'\mid \prd \phi s\steps \sigma \prd \phi {s'}},\sigma\big)\mid \sigma\in\straces{s}\big\}$ is the set of intermediate states and $\set{\pass,\fail}$ is the set of \emph{verdict states} \cite{Tretmans08},
    \item $\X_0=\set{(\set{s'\mid \prd \phi s\steps \eseq \prd \phi {s'}},\eseq)}$ is the initial state of the test suite,
    \item $\dact=\act\uplus\set\delta$ is the set of actions, and
    \item the transition relation $T'$ is defined as the smallest relation satisfying the following rules.
        \begin{mathpar}
            \irule{(X,\sigma),(Y,\sigma a)\in \X}{(X,\sigma) \stepa {a} {\phi} (Y,\sigma a)}\label{rule:tb-a-step}\qquad\qquad
            \irule{a\in\out\cup\set\delta\\\\ (X,\sigma) \stepa {a} \phi (Y,\sigma' )}{(X,\sigma) \stepa {a} \phi \pass}\label{rule:tb-pass}

            \irule{ a\in\out\cup\set\delta\\\\ (X,\sigma)\not\stepa{a}\phi\pass}{(X,\sigma) \stepa {a} \phi \fail}\label{rule:tb-fail}\qquad\qquad
            \irule{a\in\out\cup\set\delta}{\pass \stepa a \phi \pass \\\\ \fail \stepa a\phi \fail}\label{rule:tb-loop}
        \end{mathpar}
\end{enumerate}
\end{definition}
Intuitively, the test suite for a feature specification is an IOFTS (possibly with an infinite number of states), which contains all the possible test cases that can be generated from the feature specification. Rule~\eqref{rule:tb-a-step} states that if $X$ and $Y$ are nonempty sets of reachable states from $s$ (under feature restriction $\phi$) with the suspension traces $\sigma$ and $\sigma a$, respectively, then there exists a transition of the form $(X,\sigma) \stepa{a}\phi (Y,\sigma a)$ in the test suite.
Rules~\eqref{rule:tb-pass} and \eqref{rule:tb-fail} model, respectively, the successful and the unsuccessful observation of outputs and quiescence. Note that input actions are not included in rules \eqref{rule:tb-pass} and \eqref{rule:tb-fail} because the implementation is assumed to be input-enabled \cite{Tretmans08}; hence, they are already covered by rule~\eqref{rule:tb-a-step}. Rule~\eqref{rule:tb-loop} states that the verdict states contain a self-loop for each and every output action, as well as for  quiescence.

\begin{figure}[htb]
\centering
\tikzset{
  state/.style={
    rounded rectangle,
    draw=black,
  }
}
\tikzset{
  state1/.style={
    circle,
    draw=black,
    fill=black
  }
}
{
\begin{tikzpicture}[->,shorten >=1pt,auto, semithick,inner sep=1pt, minimum size=0pt]
\node[state] (s1) {\scriptsize$\set{s_0},\eseq$};
\node (s1a) at ($(s1.center)+(-2,0)$) {$\cdots$};
\node (s1f) at ($(s1.center)+(3,-1.5)$) {\scriptsize$\fail$};
\node[state] (s2) at ($(s1.center)+(0,-1.5)$) {\scriptsize$\set{s_1},\action{on}$};
\node[state] (s2l) at ($(s2.center)+(-3,0)$) {\scriptsize$\set{s_0},\action{on}\ \action{off}$};
\node (s2ld1) at ($(s2l.center)+(-2.5,0.5)$) {$\cdots$};
\node (s2ld2) at ($(s2l.center)+(-2.5,-0.5)$) {\scriptsize$\fail$};
\node[state] (s2ld3) at ($(s2l.center)+(0,-1.5)$) {\scriptsize$\set{s_1},\action{on}\ \action{off}\ \action{on}$};
\node[state] (s2ld4) at ($(s2ld3.center)+(0,-1.5)$) {\scriptsize$\set{s_2},\action{on}\ \action{off}\ \action{on}\ \action{det}$};
\node (s2ld5) at ($(s2ld4.center)+(0,-1.5)$) {$\cdots$};
\node[state] (s3) at ($(s2.center)+(0,-1.5)$) {\scriptsize$\set{s_2},\action{on}\ \action{det}$};
\node (s3d) at ($(s3.center)+(0,-1.5)$) {$\cdots$};
\node[state] (s4) at ($(s2.center)+(3,-1.5)$) {\scriptsize$\set{s_1},\action{on}\ \action{rgl}$};
\node[state] (s5) at ($(s4.center)+(0,-1.5)$) {\scriptsize$\set{s_2},\action{on}\ \action{rgl}\ \action{det}$};
\node (s5d) at ($(s5.center)+(0,-1.5)$) {$\cdots$};
\path[->]
    ($(s1.center)+(0,0.5)$) edge (s1)
    (s1) edge node[above,sloped] {\scriptsize\action{rgl,srgl}} (s1f)
    (s1) edge node[above] {\scriptsize\action{\delta}} (s1a)
    (s2) edge node[above] {\scriptsize\action{\delta,srgl}} (s1f)
    (s1) edge node {\scriptsize\action{on}} (s2)
    (s2) edge node[above] {\scriptsize\action{off}} (s2l)
    (s2l) edge node[above] {\scriptsize\action{\delta}} (s2ld1)
    (s2l) edge node[above,sloped] {\scriptsize\action{rgl,srgl}} (s2ld2)
    (s2l) edge node[left] {\scriptsize\action{on}} (s2ld3)
    (s2ld3) edge node[left] {\scriptsize\action{det}} (s2ld4)
    (s2ld4) edge node[left] {\scriptsize\action{srgl}} (s2ld5)
    (s2) edge node[left] {\scriptsize\action{det}} (s3)
    (s3) edge node[left] {\scriptsize\action{srgl}} (s3d)
    (s2) edge node[above,sloped] {\scriptsize\action{rgl}} (s4)
    (s4) edge node[left] {\scriptsize\action{det}} (s5)
    (s5) edge node[left] {\scriptsize\action{srgl}} (s5d);
\end{tikzpicture}
}
\caption{The test suite of the cruise controller example.}
\label{fig:tsuite-cc}
\end{figure}
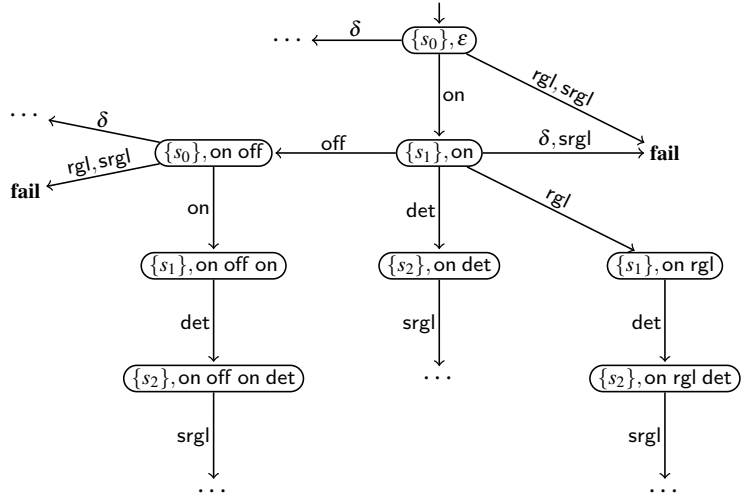

\begin{example}
The test suite for the IOFTS of Example~\ref{ex:cruise-cons-spec} is (partially) depicted in Figure~\ref{fig:tsuite-cc}.
\end{example}
A reader familiar with the original ioco theory \cite{Tretmans08} will immediately notice that our definition of a test suite (Definition~\ref{def:testsuite}) is nonstandard. In particular, a test suite is defined as a set of test cases (i.e., input-output transition systems with certain restrictions) with finite number of states in \cite{Tretmans08}; whereas we represent a test suite by an IOFTS, possibly with an infinite number of states. To this end, we define a test case to be a finite projection of a test-suite with the additional restriction that at each
moment of time at most one input can be fed into the system under test (see \cite{fioco-sac14} for a formal definition). As a result, our test cases are structurally similar to Tretmans' formulation of the test cases, by which we mean that:
\begin{itemize}
    \item a test case is always deterministic,
    \item a test case is always input enabled, and
    \item a test case has no cycles except those in the verdict states $\pass$ and $\fail$.
\end{itemize}
Another notable difference, that is key to define the concepts of Section~\ref{sec:spinal}, is that states of a test suite (or test case) carry some mathematical structure, whereas the states of a test case in \cite{Tretmans08} are abstract and carry no structure.

Next, we define a \emph{synchronous observation} operator $\parallel$ that allows us to model a test run on an implementation (cf. \cite{Tretmans08}). This is defined over a test suite and an IOFTS (the intended implementation) as follows. (Note that the calligraphic letters $\mcal X,\mcal Y$ in the following rules range over the states of a test suite.)
\begin{mathpar}
\irule{\mcal X \step a \mcal Y \quad \prd \phi s \step a \prd \phi {s'} \quad a\in\act_{\delta}}{\mcal X \parallel \prd \phi s \stepa a \top \mcal Y \parallel \prd \phi {s'}}

\irule{\prd \phi s \step \tau \prd \phi {s'} }{\mcal X \parallel \prd \phi s \stepa \tau \top \mcal X \parallel \prd \phi {s'}}

\end{mathpar}
Having defined the notion of synchronous observation, we can now define what it means for a feature specification to pass s test suite. Informally, a feature specification passes a test suite  if and only if no trace of the synchronous observation of the test suite and the feature specification leads to the $\fail$ verdict state.
\begin{definition}
Let $\X_0$ be the initial state of a test suite $\mcal T(s,\phi)$. A feature specification $\prd{\phi'}{s'}$ \emph{passes} the test suite $\mcal T(s,\phi)$ iff
$$\forall_{\sigma\in\fseq{\act_\delta},\mcal X,s''}\ \ \X_0\parallel \prd {\phi'}{s'} \steps \sigma \mcal X\parallel \prd {\phi'} {s''} \ \Rightarrow \ \mcal X\neq\fail.$$
The implementation $\prd{\phi'}{s'}$ \emph{conforms} to the specification $\prd\phi s$ iff $\prd{\phi'}{s'}$ passes the test suite $\mcal T(s,\phi)$.
\end{definition}

\section{Spinal test suite}\label{sec:spinal}
As mentioned in the introduction, one of the challenges in testing a software product line is to minimize the test effort. The idea pursued in this section is to organize the test process of a product line incrementally.
This is achieved by reusing the test results of an already tested product to test a product with similar features, thereby dispensing with the test cases targeted at  the common features.
To this end, we introduce the notion of \emph{spinal test suite}, which prunes away the behavior of a specified set of features from an \emph{abstract} test suite $\mcal T(s,\phi)$ with respect to a \emph{concrete} test suite $\mcal T(s,\lambda)$ of the already tested product $\lambda$; the spinal test suite is only defined when $\lambda$ is  valid w.r.t.\ $\phi$, i.e., $\lambda\models\phi$. The latter constraint means that the concrete product builds upon the already-tested features in the abstract test suite.

Notably, which behavior has to be pruned from an abstract test suite is crucial in defining a spinal test suite. One way to address this situation is \emph{by allowing only those reachable states in the abstract test suite from which a new behavior relative to the already tested product emanates}. However, without any formal justification, we claim that such a strategy will not reduce the effort to test new behavior with respect to the already tested product.

For example, consider the test suite depicted in Figure~\ref{fig:tsuite-cc} and suppose we have already tested the cruise controller without collision avoidance feature and now are interested in the correct implementation of the collision avoidance feature. By following the aforementioned strategy of pruning, none of the following states $(\set{s_1},\action{on}),(\set{s_1},\action{on}\ \action{off}\ \action{on}), \cdots$ will be removed because the event $\action{det}$ is enabled from each of these states. On the other hand, since we know that cruise controller without collision avoidance feature was already tested, it is safe to consider the new suspension traces (or testing experiments) from only one state in $\{(\set{s_1},\action{on}),(\set{s_1},\action{on}\ \action{off}\ \action{on}), \cdots\}$.

\begin{definition}
Let $\X_0$ be the initial state of a test suite $\mcal T(s,\phi)$. A path $\X_0 \steps\sigma (X,\sigma)$ is a \emph{spine} of a path $\X_0 \steps{\sigma'} (X,\sigma')$, denoted by $\sigma\spine \sigma'$, when $\sigma$ is a sub-trace of $\sigma'$ (obtained by removing zero or more action from $\sigma'$) and no two states visited during the trace $\sigma$ have the same $X$-component; this is formalized by the predicate $\bt(X,\sigma)$, defined below:
\[\forall_{\sigma_1,\sigma_2,\sigma_3,Y,Z}\ \big(\X_0 \steps{\sigma_1} (Y,\sigma_1) \steps{\sigma_2} (Z,\sigma_2) \steps{\sigma_3} (X,\sigma) \land \sigma_2 \neq \varepsilon \land \sigma=\sigma_1\sigma_2\sigma_3\big) \Rightarrow Y\neq Z.\]
Furthermore, we let $\bt(\X_0)=\top$.
%
%
\end{definition}

\begin{example}\label{ex:phi-lambda}
Recall the feature specification given $\prd \phi {s_0}$ in Example~\ref{ex:cruise-cons-spec}, where $\phi=\action{cc}\wedge\neg\action{cac}$. Since collision avoidance controller is an optional feature, we know that there exists a product configuration $\lambda$ with $\lambda(\action {cc})=\top$ and $\lambda(\action{cac})=\bot$. Then, the path labelled ``$\action{on}$'' (in the test suite drawn in Figure~\ref{fig:tsuite-cc}) is a spine of the path labelled ``$\action{on}\ \action{off}\ \action{on}$'' because they both reach to a common $X$-component $\set{s_1}$ in the test suite and $\bt(\set{s_1},\action{on})=\top$.
\end{example}
%
\begin{definition}
Let $(\X\cup\set{\pass,\fail},\X_0, \act_\delta,F,T,\Lambda)$ be a test suite $\mcal T(s,\phi)$ and let $\lambda$ be a product such that $\lambda\models\phi$. Then a \emph{spinal test suite} with respect to a product $\lambda$, denoted by $\sres$, is an IOFTS $(\X\cup\set{\pass,\fail},\X_0,\act_\delta,F,T',\Lambda')$, where
\begin{enumerate}
    \item The set of non-verdict states $\X$ is defined as $\X'\cup\X''$, where
        \begin{align*}
        \X'=&\ \set{(X,\sigma)\in\X^\phi_s \mid \sigma\in\straces{\prd\lambda s} \wedge \bt(X,\sigma)}\\
        \X''=&\ \{(Y,\sigma a \sigma')\in\X^\phi_s \mid \new (\sigma, a)\wedge \exists_{X}\ (X,\sigma)\in\X' \wedge (X,\sigma) \steps{a \sigma'} (Y,\sigma a \sigma')\}.
        \end{align*}
        where, $\new(\sigma, a)\ \Leftrightarrow\ \sigma\in\straces{\prd\lambda s} \wedge \exists_{s',s''}\ \prd\phi s\steps{\sigma} \prd\phi {s'} \stepa a {\phi'}\prd \phi{s''} \wedge \lambda\not\models\phi'$. 
        Intuitively, the predicate $\new(\sigma,a)$ asserts whether there is an $a$-transition after the suspension trace $\sigma$ that is ``new'' with respect to the tested product $\lambda$.
    \item The set of transition relations $T'$ is defined as
        \[T'=\set{(\mcal X,a,\mcal Y)\in T\mid \mcal X,\mcal Y\in \X}.\]
    \item The set of product configurations $\Lambda' = \Lambda\setminus\set\lambda$.
\end{enumerate}
\end{definition}
Intuitively, Condition 1 defines $\X'$ to be a set of non-verdict states of the form $(X,\sigma)$ such that $\sigma$ is a suspension trace of the already tested product $\prd\lambda s$ and the predicate $\bt(X,\sigma)$ holds; whereas, $\X''$ is the set of non-verdict states reachable from a state in $\X'$ by a trace that is not a suspension trace of the tested product $\prd\lambda s$. Condition 2 and 3 are self-explanatory.

As an example, the spinal test suite generated from the test suite in Figure~\ref{fig:tsuite-cc} is partially drawn in Figure~\ref{fig:stsuite-cc}.
\begin{figure}[htb]
\centering
\tikzset{
  state/.style={
    rounded rectangle,
    draw=black
  }
}
{
\begin{tikzpicture}[->,shorten >=1pt,auto, semithick,inner sep=1pt, minimum size=0pt, every edge/.append style={font=\scriptsize},
every node/.append style={font=\scriptsize}]
\node[state] (s1) {$\set{s_0},\eseq$};
\node (s1a) at ($(s1.center)+(-2,0)$) {$\pass$};
\node (s1f) at ($(s1.center)+(3,0)$) {$\fail$};
\node[state] (s2) at ($(s1.center)+(0,-1.5)$) {$\set{s_1},\action{on}$};
\node (s2f) at ($(s2.center)+(3,0)$) {$\fail$};
\node (s2a) at ($(s2.center)+(-2,0)$) {$\pass$};
\node[state] (s3) at ($(s2.center)+(0,-1.5)$) {$\set{s_2},\action{on}\ \action{det}$};
\node (s3f) at ($(s3.center)+(3,0)$) {$\fail$};
\node (s3a) at ($(s3.center)+(-2,0)$) {$\cdots$};
\node (s3c) at ($(s3.center)+(0,-1.5)$) {$\cdots$};
\path[->]
    ($(s1.center)+(0,0.5)$) edge (s1)
    (s1) edge node[above] {\action{rgl,srgl}} (s1f)
    (s1) edge node[above] {\action{\delta}} (s1a)
    (s2) edge node[above] {\action{\delta,srgl}} (s2f)
    (s1) edge node {\action{on}} (s2)
    (s2) edge node[above] {\action{rgl}} (s2a)
    (s3) edge node[above] {\action{\delta,rgl}} (s3f)
    (s2) edge node {\action{det}} (s3)
    (s3) edge node[above right] {\action{nor}} (s3a)
    (s3) edge node[left] {\action{sgl}} (s3c);
\end{tikzpicture}
}
\caption{Spinal test suite of the cruise controller}
\label{fig:stsuite-cc}
\end{figure}
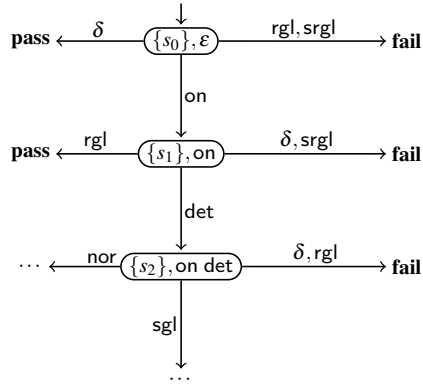

\section{Exhaustiveness of Spinal Test Suites}\label{sec:exhaust}
The spinal test suite $\sres$ contains the spines of those paths from the test suite $\mcal T(s,\phi)$ that lead to new behavior w.r.t.\ to the already-tested product $\lambda$.
Next, we show that  the spinal test suite $\sres$
is not necessarily exhaustive for an arbitrary implementation under test, i.e., it may have \emph{strictly} less testing power than the test suite $\mcal T(s,\phi)$. We show this through the following example.

\begin{example}
Consider an implementation of a cruise controller with a collision avoidance feature modeled as the IOFTS depicted in Figure~\ref{fig:cc-incorrect-impl}.
\begin{figure}[htb]
\centering
\tikzset{
  state/.style={
    circle,
    draw=black,
  }
}
\begin{tikzpicture}[->,shorten >=1pt,auto, semithick,inner sep=1pt, minimum size=0pt,every edge/.append style={font=\scriptsize}]
\node[state] (s1) {};
\node[state] (s2) at ($(s1.center) + (1,0)$) {};
\node[state] (s3) at ($(s2.center) + (0,-1)$) {};
\node[state] (s4) at ($(s3.center) + (2,0)$) {};
\node[state] (s5) at ($(s3.center) + (0,-1)$) {};
\node[state] (s6) at ($(s2.center) + (2,1)$) {};
\path[->]
    ($(s1.center)+(-0.5,0)$) edge (s1)
    (s1) edge node {\action{on}} (s2)
    (s2) edge node[right] {\action{off}} (s3)
    (s2) edge[loop above] node {\action{rgl}} (s2)
    (s3) edge[bend left] node {\action{on}} (s4)
    (s4) edge[loop right] node {\action{rgl}} (s4)
    (s4) edge[bend left] node {\action{off}} (s3)
    (s4) edge[bend left] node {\action{det}} (s5)
    (s5) edge[loop below] node {\action{rgl}} (s5)
    (s5) edge[bend left] node {\action{nor}} (s2)
    (s2) edge[bend left] node {\action{det}} (s6)
    (s6) edge[bend left] node {\action{nor}} (s2)
    (s6) edge[loop right] node {\action{srgl}} (s6);
\end{tikzpicture}
\caption{A faulty implementation of the cruise controller with control avoidance.}
\label{fig:cc-incorrect-impl}
\end{figure}
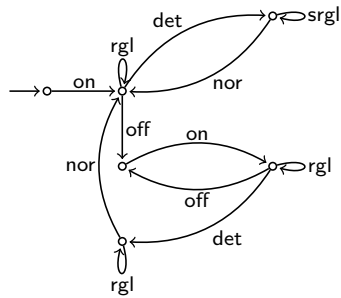
Clearly, this implementation is a faulty one as the action `\action{rgl}' must be prohibited after detecting an obstacle, i.e., after executing the transition labeled `\action{det}'.

As soon as we place the test suite (Figure~\ref{fig:tsuite-cc}) in parallel ($\parallel$) with the above-given implementation, we observe that the following synchronous interactions emerge: $\action{on}.\action{off}.\action{on}.\action{det}.\action{rgl}$, which lead to the $\fail$ verdict state. However, note that the aforementioned fault in the implementation cannot be detected while interacting with the spinal test suite of Figure~\ref{fig:stsuite-cc},  because there are no transitions labeled with \action{off} in the spinal test suite. Thus, a spinal test suite $\sres$ has strictly less testing power than the test suite $\mcal T(s,\phi)$.
\end{example}

Next, we explore when a spinal test suite $\sres$ (where $\lambda\models\phi$) together with a concrete test suite $\mcal T(s,\lambda)$ have the same testing power as the abstract test suite $\mcal T(s,\phi)$.
\begin{definition}
Let $\lambda\models\phi$. A feature specification $\prd{\phi'}{s'}$ is \emph{orthogonal} w.r.t $\prd \phi s$ and the product $\lambda$ iff 
$$\forall_{s_1,\sigma',a,\sigma''}\ \left(\new (\sigma',a) \wedge \prd{\phi'}{s'}\steps{\sigma' a\sigma''}\prd{\phi'}{s_1}\right) \ \Rightarrow \ \exists_{s_2,\sigma}\ \prd{\phi'}{s'}\steps{\sigma a\sigma''} \prd{\phi'}{s_2} \wedge \sigma\spine\sigma'.$$
\end{definition}

\begin{example}
Recall the feature specification $\prd \phi {s_0}$ and the product $\lambda$ (which omits the control avoidance feature) from Example~\ref{ex:phi-lambda}. Note that the implementation given in Figure~\ref{fig:cc-incorrect-impl} is not orthogonal w.r.t the feature specification $\prd \phi {s_0}$ and the product $\lambda$ because the underlined subsequence in ``$\action{on}\ \action{off}\ \action{on}\ \underline{\action{det} \ \action{rgl}}$'' cannot be extended with the spine sequence $\action{on}$.
\end{example}

In the remainder, we prove the main result (Theorem~\ref{thm:sres}) of this section that an orthogonal implementation passes the test suite $\mcal T(s,\phi)$ whenever it passes the concrete test suite $\mcal T(s,\lambda)$ and the spinal test suite $\sres$.
\begin{lemma}\label{lemma:spine-fail}
Let $\X_0$ be the initial state of a test suite $\mcal T(s,\phi)$ and let $\lambda$ be a product with $\lambda\models\phi$. If $\X_0 \steps {\sigma'a\sigma''}\fail$, $\new(\sigma',a)$, and $\sigma\spine\sigma'$ then $\X_0 \steps {\sigma a\sigma''}\fail$.
\end{lemma}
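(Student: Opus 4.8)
The plan is to exploit a structural feature of the test suite $\mcal T(s,\phi)$: the behaviour of a non-verdict state $(X,\sigma)$ — and in particular which continuations drive it into $\fail$ — is determined by the $X$-component alone, independently of the recorded trace $\sigma$. Indeed, by rule~\eqref{rule:tb-a-step} the $X$-component reached after appending an action $b$ equals $\set{s''\mid\exists_{s'\in X}\ \prd\phi{s'}\steps b\prd\phi{s''}}$, which depends only on $X$ and $b$; and by rules~\eqref{rule:tb-pass}--\eqref{rule:tb-fail} an output or $\delta$ action $b$ from $(X,\sigma)$ leads to $\pass$ exactly when that set is nonempty and to $\fail$ exactly when it is empty. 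So I would first prove the claim: \emph{if $(X,\rho_1),(X,\rho_2)\in\X$ share the same $X$-component, then $(X,\rho_1)\steps\eta\fail$ iff $(X,\rho_2)\steps\eta\fail$ for every $\eta\in\fseq{\act_\delta}$.} The cleanest justification is that the map which replaces the prefix $\rho_1$ by $\rho_2$ in the trace-component (and fixes $\pass$ and $\fail$) is an isomorphism between the sub-IOFTS of $\mcal T(s,\phi)$ reachable from $(X,\rho_1)$ and the one reachable from $(X,\rho_2)$; checking that it is a label-preserving homomorphism in both directions is a routine case analysis over the four rules of Definition~\ref{def:testsuite}, using the observation above together with $X_{\rho_1\kappa}=X_{\rho_2\kappa}$ for every $\kappa$ (immediate induction on $|\kappa|$). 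One can also argue the claim directly by induction on $|\eta|$.

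Alongside this I would record two easy facts about $\mcal T(s,\phi)$. First, among non-verdict states the transition relation is deterministic (rule~\eqref{rule:tb-a-step} pins down the unique non-verdict successor, whose trace-component is exactly the trace travelled so far), so the state reached from $\X_0$ by any path labelled $\rho$ is either a verdict state or exactly $\big(\set{s'\mid\prd\phi s\steps\rho\prd\phi{s'}},\rho\big)$. Second, once a path enters $\pass$ it never leaves, and from $\fail$ only output or $\delta$ self-loops are available (rule~\eqref{rule:tb-loop}); in particular a path that ends in $\fail$ visits $\fail$ only on a suffix of positions whose actions are outputs or $\delta$, and never visits $\pass$.

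Now to the lemma. Fix a path $\pi\colon\X_0\steps{\sigma'a\sigma''}\fail$ and let $\mu=\mu_0 b$ be the shortest prefix of $\sigma'a\sigma''$ after which $\pi$ is already at $\fail$ (nonempty since $\X_0\neq\fail$). By the two facts, $\pi$ is at the non-verdict state $\big(\set{s'\mid\prd\phi s\steps{\mu_0}\prd\phi{s'}},\mu_0\big)$ after $\mu_0$ and then takes a $\fail$-transition, so by rule~\eqref{rule:tb-fail} $b$ is an output or $\delta$ and that state has no $b$-transition to $\pass$, whence $\mu=\mu_0 b\notin\straces{\prd\phi s}$. The crucial step is that $\mu$ cannot be a prefix of $\sigma'$: from $\new(\sigma',a)$ we have $\prd\phi s\steps{\sigma'}\prd\phi{s'}$, so $\sigma'\in\straces{\prd\phi s}$, and since $\straces{\prd\phi s}$ is prefix-closed every prefix of $\sigma'$ lies in $\straces{\prd\phi s}$ — contradicting $\mu\notin\straces{\prd\phi s}$. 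Hence $\mu$ is strictly longer than $\sigma'$, so $\mu=\sigma' a\sigma_2$ with $\sigma''=\sigma_2\sigma_3$ and $\sigma_3\in(\out\cup\set\delta)^*$; moreover $\pi$ has reached no verdict state after reading only $\sigma'$ (it first hits $\fail$ at $\mu$, and $\pass$ would be a trap), so by the first fact it is then at $(X,\sigma')$ with $X=\set{s'\mid\prd\phi s\steps{\sigma'}\prd\phi{s'}}$. In particular $(X,\sigma')\steps{a\sigma_2}\fail$.

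Finally, $\sigma\spine\sigma'$ supplies a path $\X_0\steps\sigma(X,\sigma)$ whose endpoint carries the \emph{same} $X$-component $X$ (this equality is built into the definition of $\spine$, where the two paths reach $(X,\sigma)$ and $(X,\sigma')$). Applying the claim of the first paragraph to $(X,\sigma)$ and $(X,\sigma')$ with $\eta=a\sigma_2$ gives $(X,\sigma)\steps{a\sigma_2}\fail$; prepending $\X_0\steps\sigma(X,\sigma)$ and re-appending $\sigma_3$ via the $\fail$ self-loops yields $\X_0\steps{\sigma a\sigma_2\sigma_3}\fail=\X_0\steps{\sigma a\sigma''}\fail$, as required. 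I expect the one genuine obstacle to be the crucial step above: it is the sole place where the hypothesis $\new(\sigma',a)$ is used, and it depends on correctly reading off the firing conditions of rules~\eqref{rule:tb-pass}/\eqref{rule:tb-fail} (a $\fail$-transition on $b$ from $(X,\rho)$ exists precisely when $\rho b\notin\straces{\prd\phi s}$); the remaining steps are routine bookkeeping.
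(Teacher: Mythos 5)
Your proposal is correct and follows essentially the same route as the paper's (sketched) proof: decompose the failing run at the non-verdict state $(X,\sigma')$, use the definition of spine to obtain $\X_0\steps{\sigma}(X,\sigma)$ with the same $X$-component, and then transfer the failing continuation by an induction showing that $\fail$-reaching behaviour depends only on the $X$-component. The main difference is that you make explicit two points the paper leaves implicit — that the run cannot hit a verdict state before exhausting $\sigma'$ (via $\new(\sigma',a)$ and prefix-closedness of suspension traces) and the precise justification of the transfer step — so your write-up is a faithful, more detailed elaboration rather than a different argument.
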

\begin{proof}[Proof sketch]
Let us first decompose the sequence of transitions $\X_0 \steps {\sigma'\sigma''}\fail$ as $\X_0 \steps {\sigma'} (X,\sigma') \steps {\sigma''}\fail$, for some $X$. Then by definition of a spine path we get $\iafter\phi\steps{\sigma} (X,\sigma)$. Next, it is straightforward to show by induction on $\sigma''$ that $(X,\sigma) \steps{a\sigma''} \fail$, whenever $(X,\sigma') \steps{a\sigma''} \fail$ and $\new (\sigma',a)$.
\end{proof}

\begin{theorem}\label{thm:sres}
Let $\prd {\phi'} {s'}$ be orthogonal w.r.t.\ to $\prd\phi s$ and $\lambda$. If $\prd {\phi'} {s'}$ pass the test suites $\mcal T(s,\lambda)$ and $\sres$, then $\prd {\phi'} {s'}$ passes the test suite $\mcal T(s,\phi)$.
\end{theorem}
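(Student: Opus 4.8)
The plan is to prove the contrapositive: assuming $\prd{\phi'}{s'}$ fails the abstract test suite $\mcal T(s,\phi)$, I will show that it must fail either $\mcal T(s,\lambda)$ or the spinal test suite $\sres$. So suppose there is a failing run $\X_0 \parallel \prd{\phi'}{s'} \steps{\rho} \fail \parallel \prd{\phi'}{s_1}$ for some trace $\rho$ and state $s_1$. By the rules for the synchronous observation operator and the test-suite transition rules, this run is driven by a path $\X_0 \steps{\rho} \fail$ in $\mcal T(s,\phi)$ together with a matching run $\prd{\phi'}{s'} \steps{\rho} \prd{\phi'}{s_1}$ in the implementation (modulo the $\tau$-steps, which only the implementation takes). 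Fix such a $\rho$; without loss of generality I may take $\rho$ to be of minimal length among all failing traces.

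The first key step is a case distinction on whether $\rho$ is a suspension trace of the already-tested product $\prd\lambda s$. If $\rho \in \straces{\prd\lambda s}$, then every intermediate state of the path $\X_0\steps{\rho}\fail$ lies in the concrete test suite $\mcal T(s,\lambda)$ (one has to check that the transition rules of $\mcal T(s,\lambda)$ admit exactly the same $\fail$-edge, which holds because $\lambda\models\phi$ ensures the output/$\delta$ actions enabled under $\lambda$ are among those enabled under $\phi$; and the $\bt$-predicate is irrelevant here since $\mcal T(s,\lambda)$ carries no spine restriction). Hence the same failing run witnesses that $\prd{\phi'}{s'}$ fails $\mcal T(s,\lambda)$, contradiction. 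So assume $\rho\notin\straces{\prd\lambda s}$. Then, walking along $\rho$, there is a first action that ``leaves'' the tested product: I can write $\rho = \sigma' a \sigma''$ where $\sigma'\in\straces{\prd\lambda s}$ and the $a$-edge taken from $(X,\sigma')$ in $\mcal T(s,\phi)$ is annotated with some $\phi'$ such that $\lambda\not\models\phi'$ — in other words $\new(\sigma',a)$ holds. (Here I use rule~\eqref{rule:tb-a-step} / the definition of the test suite to recover the feature annotation, and the fact that along a $\delta$- or output-edge one can likewise detect when $\lambda$ falls out.)

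Now I invoke orthogonality of $\prd{\phi'}{s'}$: since $\new(\sigma',a)$ holds and $\prd{\phi'}{s'}\steps{\sigma' a \sigma''}\prd{\phi'}{s_1}$, there exist $s_2$ and $\sigma$ with $\sigma\spine\sigma'$ and $\prd{\phi'}{s'}\steps{\sigma a \sigma''}\prd{\phi'}{s_2}$. On the test-suite side, from $\X_0\steps{\sigma' a \sigma''}\fail$ together with $\new(\sigma',a)$ and $\sigma\spine\sigma'$, Lemma~\ref{lemma:spine-fail} gives $\X_0\steps{\sigma a \sigma''}\fail$. The final step is to check that this shorter failing path lives entirely inside the spinal test suite $\sres$: the prefix reaching $(X,\sigma)$ uses states of the form $(X,\sigma)$ with $\sigma$ a sub-trace of $\sigma'\in\straces{\prd\lambda s}$ and $\bt(X,\sigma)$ — these are exactly the states of $\X'$ — while the suffix $a\sigma''$ (up to and including $\fail$) consists of states reachable from $(X,\sigma)\in\X'$ along a trace starting with the ``new'' action $a$, which are exactly the states of $\X''$ (together with $\fail$, which $\sres$ retains). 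Pairing $\X_0\steps{\sigma a\sigma''}\fail$ with $\prd{\phi'}{s'}\steps{\sigma a\sigma''}\prd{\phi'}{s_2}$ under $\parallel$ yields a run of $\sres\parallel\prd{\phi'}{s'}$ into $\fail$, so $\prd{\phi'}{s'}$ fails $\sres$, the desired contradiction.

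The main obstacle I anticipate is the bookkeeping in the second and third steps: pinning down the decomposition $\rho=\sigma' a\sigma''$ so that $\new(\sigma',a)$ genuinely holds (in particular matching the syntactic feature annotation $\phi'$ on the test-suite edge with the one in the definition of $\new$, and handling the case where the ``new'' action is $\delta$ or an output rather than an input), and then verifying membership of every state of the pruned path in $\X'\cup\X''$ — especially that the $\bt$ condition is not violated along the spine $\sigma$, which is where the choice of $\sigma$ provided by orthogonality, combined with the definition of $\spine$ (which already bakes in $\bt(X,\sigma)$), has to be used carefully. The appeal to Lemma~\ref{lemma:spine-fail} itself is routine once the decomposition is set up.
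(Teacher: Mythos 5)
Your proposal is correct and follows essentially the same route as the paper's own proof: a case split on whether the failing trace is a suspension trace of $\prd\lambda s$, with the first case contradicting passage of $\mcal T(s,\lambda)$ and the second case using the decomposition $\sigma=\sigma_1 a\sigma_2$ with $\new(\sigma_1,a)$, orthogonality, and Lemma~\ref{lemma:spine-fail} to contradict passage of $\sres$. Your additional check that the pruned failing path actually lies inside $\X'\cup\X''$ is a detail the paper leaves implicit, and it is handled correctly.
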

\begin{proof}
Let $\X_0$ be the initial state of the test suite $\mcal T(s,\phi)$. We will prove this theorem by contradiction. Let $\prd {\phi'} {s'}$ pass the test suites $\mcal T(s,\lambda)$ and $\sres$. Suppose $\prd {\phi'} {s'}$ fails in passing the test suite $\mcal T(s,\phi)$. Then, there exists the following sequences of transitions $\X_0 \steps \sigma \fail$ and $\prd {\phi'}{s'} \steps \sigma \prd {\phi'}{s_1'}$ (for some $\sigma,s_1'$) in the test suite $\mcal T(s,\phi)$ and the feature specification $\prd {\phi'}{s'}$. Now there are two possibilities:
\begin{enumerate}
    \item Either, $\sigma\in\straces{\prd\lambda s}$. Then, the feature specification $\prd {\phi'}{s'}$ fails to pass the test suite $\mcal T(s,\lambda)$. Hence, a contradiction.
    \item Or, $\sigma\not\in\straces{\prd\lambda s}$. Then, the sequence of transitions $\X_0\steps\sigma\fail$ can be decomposed in the following way:
        $\X_0\steps {\sigma_1a\sigma_2} \fail$ with $\sigma=\sigma_1a\sigma_2$ and $\new (\sigma_1,a)$.

        Since the feature specification $\prd {\phi'}{s'}$ is orthogonal w.r.t.\ $\prd\phi s$ and $\lambda$, we have $$\exists_{s_2',\sigma_1'}\ \prd{\phi'}{s'} \steps {\sigma_1'a\sigma_2} \prd {\phi'}{s_2'} \wedge \sigma_1' \spine\sigma_1.$$ Then, by applying Lemma~\ref{lemma:spine-fail} we get the following path in the spinal test suite: $\X_0 \steps{\sigma_1'a\sigma_2} \fail$. Thus, $\prd{\phi'}{s'}$ fails to pass the spinal test suite $\sres$; hence, a contradiction.\qedhere
\end{enumerate}
\end{proof}

\section{Related work}\label{sec:related}

Various attempts have been made regarding formal and informal modeling of SPLs, on which \cite{Schaefer2012,Classen2010b,Schmid2011,Czarnecki:2012,Sinnema:2007} provide comprehensive surveys. By and large, the literature can be classified into two categories: structural modeling  and behavioral modeling techniques.

Structural models specify variability in terms of presence and absence of features (assets, artifacts) in various products and
their mutual inter-relations. Behavioral models, however, concern the working of features and their possible interactions, mostly based on some form of finite state machines or labeled transition systems.
The main focus in behavioral modeling of SPLs (cf. \cite{Asirelli:2011:compfts,Asirelli:2011:mcheck-services,Classen:2012:fts,Classen:2010:lots,Fischbein:2006:conformance,Gruler:2008:PL-ccs,Larsen:2007:MIA})
has been on formal specification of SPLs and adaptation of formal verification (mostly model checking) techniques to this new setting.


In addition, several testing techniques have been adapted to SPLs, of which \cite{Oster2011,Lamancha2013,Neto2011,Engstrom2011} provide recent overviews. Hitherto, most fundamental approaches to formal conformance testing \cite{book-mbt} have not been adapted sufficiently to the SPL setting. The only exception that we are aware of is \cite{inc-mbt-delta:2012}, which presents an LTS-based incremental derivation of test suites by applying principles of regression testing and delta-oriented modeling \cite{Abstract-delta:2010}.

Although our work is based on input-output conformance testing, we envisage that the ideas pursued in this paper can be adapted to other fundamental theories of conformance testing, e.g., those based on finite state machines \cite{book-mbt,Yannakakis99}.

\section{Conclusions}\label{sec:conc}
In this paper, we introduced the notion of spinal test suites, which can be used in order to incrementally test different products of a software product line.
A spinal test suite only tests the behavior induced by the ``new'' features and dispenses with re-testing the already-tested behavior, unless this is  necessary in order to reach the behavior of the new features.

As future work, we intend to exploit this notion and establish a methodology of testing software product lines, by automatically detecting the optimal order of testing products, which leads to a minimal size of residual test suites (with respect to a given notion of model coverage). In order to effectively use the notion of spinal test suites, we would like to define syntactic criteria that guarantee orthogonality of features.

\section*{Acknowledgments}
We thank the anonymous reviewers of MBT 2014 for their useful feedback.

\bibliographystyle{eptcs}
\bibliography{prodbib}

\end{document}